\documentclass[conference]{IEEEtran}
\IEEEoverridecommandlockouts

\usepackage{amsmath,amssymb,amsfonts}
\usepackage{amsthm}
\usepackage{bm}

\usepackage{graphicx}
\usepackage{xcolor}

\usepackage{booktabs}
\usepackage{array}
\usepackage{multirow}
\usepackage{colortbl}

\usepackage{enumitem}
\usepackage{algorithm}
\usepackage{algpseudocode}

\usepackage{tikz}
\usepackage{pgfplots}
\pgfplotsset{compat=1.18}
\usetikzlibrary{
  shapes.geometric,
  arrows.meta,
  positioning,
  calc,
  fit,
  backgrounds,
  decorations.pathreplacing
}

\usepackage[hyphens]{url}
\usepackage{microtype}
\usepackage{balance}

\usepackage[hidelinks,breaklinks]{hyperref}

\definecolor{navyblue}{RGB}{10,35,80}
\definecolor{midblue}{RGB}{25,70,150}
\definecolor{firered}{RGB}{185,25,15}
\definecolor{deepgreen}{RGB}{0,100,55}
\definecolor{goldbrown}{RGB}{175,110,0}
\definecolor{lightgray}{RGB}{230,235,242}
\definecolor{altrow}{RGB}{244,248,255}
\definecolor{boxbg}{RGB}{241,246,255}

\newtheorem{definition}{Definition}
\newtheorem{theorem}{Theorem}
\newtheorem{constraint}{Constraint}
\newtheorem{invariant}{Safety Invariant}
\newtheorem{example}{Worked Example}

\newcommand{\approve}{\textbf{\textcolor{deepgreen}{APPROVE}}}
\newcommand{\flagcmd}{\textbf{\textcolor{goldbrown}{FLAG}}}
\newcommand{\blockcmd}{\textbf{\textcolor{firered}{BLOCK}}}
\newcommand{\verif}{\mathcal{V}}
\newcommand{\at}{a_t}
\newcommand{\st}{s_t}
\newcommand{\tableheadfont}{\small\bfseries}

\begin{document}

\title{%
  Glass Box at Orbit: A Constitutional AI Verification\\
  Framework for Trustworthy Autonomous CubeSat Intelligence%
}

\author{
  \IEEEauthorblockN{Karthik Barma}
  \IEEEauthorblockA{%
    Khoury College of Computer Sciences\\
    Northeastern University\\
    Boston, MA\,02115, USA\\
    \texttt{thebarmaeffect@gmail.com}}
  \and
  \IEEEauthorblockN{Anil Sanneboyina}
  \IEEEauthorblockA{%
    School of Engineering\\
    VIT-AP University\\
    Amaravati, Andhra Pradesh, India\\
    \texttt{anilsanneboyina@gmail.com}}
  \and
  \IEEEauthorblockN{V\,C Premchand Yadav}
  \IEEEauthorblockA{%
    School of Engineering\\
    VIT-AP University\\
    Amaravati, Andhra Pradesh, India\\
    \texttt{vcpremchandyadav@gmail.com}}
}

\maketitle

\begin{abstract}
The space industry is quietly building toward something
nobody has fully reckoned with: orbital data centers
running thousands of autonomous AI workloads with no
human in the loop, 550\,km above the Earth.
Microsoft, AWS, and a growing list of orbital computing
ventures are moving cloud-scale processing off the ground
and into orbit.
What none of them have answered yet is the governance
question -- when autonomous AI systems at orbital data
center scale make wrong decisions in space, what stops
those decisions before they become irreversible?

We introduce \textbf{Glass Box}: a runtime constitutional
AI verification layer that intercepts every candidate
action from an onboard AI policy and evaluates it against
six physics-grounded constitutional constraints and seven
Linear Temporal Logic (LTL) safety invariants before a
single command reaches any spacecraft subsystem.
Every approved action carries a weighted explainability
score $E(a_t) \in [0,1]$ and a complete constitutional
audit log.
We demonstrate Glass Box within \textbf{Project October}:
a fully simulated five-layer autonomous orbital
intelligence architecture for CubeSat-class spacecraft.

We prove that Glass Box verification overhead is $O(N_c)$
in the number of constitutional rules, independent of
model size or spacecraft state dimension.
We present a complete formal specification of the
constitutional constraint grammar, seven LTL safety
invariants verified by Z3 and NuSMV model checking,
and a detailed worked example of Glass Box intercepting
an unsafe inference request at eclipse-entry under
degraded battery state.
As orbital computing scales toward data center
infrastructure, runtime constitutional verification is
no longer a research novelty -- it is mission-critical
safety infrastructure that every autonomous orbital
platform will eventually require.
\end{abstract}

\begin{IEEEkeywords}
Constitutional AI,
CubeSat Autonomy,
Runtime Verification,
Orbital Computing,
Space Data Centers,
Linear Temporal Logic,
Glass Box,
Spacecraft Safety,
Formal Verification,
Edge AI,
Trustworthy AI
\end{IEEEkeywords}

\section{Introduction}

On April 6, 2023, Microsoft and the European Space
Agency announced a partnership to explore running Azure
workloads in orbit~\cite{microsoft_esa2023}.
A month later, AWS expanded its Ground Station network
with onboard compute capabilities designed to push
cloud processing physically closer to orbital
assets~\cite{aws_groundstation}.
By the end of 2024, at least seven funded ventures --
including Starcloud, Lumen Space, and D-Orbit -- were
building or operating orbital edge computing platforms
designed to run AI workloads in
space~\cite{spacewatch2024orbital}.

None of these platforms, to the best of our knowledge,
ships with a runtime constitutional verification layer
for the AI decisions they make autonomously in orbit.

This is the problem Glass Box solves.

The question sounds narrow when framed around a single
CubeSat.
It becomes a civilizational-scale infrastructure problem
when you consider where orbital computing is going.
A CubeSat at 550\,km LEO operating with a 10-minute
ground contact window per orbit has approximately
\textbf{86 autonomous minutes per 96-minute orbit}.
An orbital data center node operating continuously has
zero ground contact minutes -- it is designed to function
without human oversight for weeks.
Every AI decision such a system makes, every resource
allocation, every inference task, every inter-satellite
coordination event, happens without a human reviewing
it first.

If an AI inference scheduler miscalculates a power
budget and drains a battery below deep-discharge
threshold during eclipse, there is no undo.
If a weight tensor partially corrupted by South Atlantic
Anomaly (SAA) radiation produces a confident but
systematically wrong classification, and that
classification drives a spacecraft maneuver command,
the error propagates before anyone on the ground
knows about it.
If an autonomous constellation coordination algorithm
assigns overlapping observation slots to two nodes
whose combined power draw exceeds orbital thermal
limits, the damage is done in the time between ground
contacts.

These are not hypothetical failure modes.
They are the structural consequences of building
autonomous orbital AI systems without runtime
safety verification.

\subsection{The Case for Constitutional Governance}

The safety engineering tradition for autonomous systems
has two dominant paradigms: formal verification before
deployment, and fault detection and recovery after
something goes wrong.
Both paradigms have critical gaps when applied to
onboard AI at orbital scale.

Pre-deployment formal verification -- model checking,
theorem proving, static analysis -- is powerful but
fundamentally incomplete for learned AI systems.
A neural network that achieves 94.2\% wildfire detection
accuracy in ground testing is a 94.2\% accurate system
under training distribution.
At 550\,km with a radiation-corrupted weight tensor and
an eclipse-entry power state outside the training
envelope, it is something else entirely, and no
pre-deployment analysis can characterize what.

Post-deployment fault detection, in the tradition of
spacecraft Fault Detection, Isolation, and Recovery
(FDIR)~\cite{biesbroek2021fdir}, catches hardware
failures after they manifest.
It has no architecture for intercepting an AI-generated
action before execution.
By the time an FDIR system detects that the battery has
gone below deep-discharge threshold, the action that
caused it has already been executed.

Constitutional governance occupies the gap between
these two paradigms.
It operates at runtime, evaluating every AI decision
against formally specified physical constraints, before
any command reaches an actuator.
It is neither pre-deployment analysis nor post-failure
recovery.
It is a continuous, real-time, physics-grounded
decision filter.

\subsection{Contributions}

This paper makes five contributions:

\begin{enumerate}[leftmargin=*, label=\textbf{C\arabic*.}, nosep]
\item \textbf{Constitutional Constraint Grammar.}
  A six-constraint battery covering power safety,
  thermal limits, battery floor, collision avoidance,
  radiation fault gating, and Bayesian confidence --
  each a closed-form predicate over live orbital
  physics state.

\item \textbf{LTL Safety Specification.}
  Seven Linear Temporal Logic invariants extending
  instantaneous constraints to behavioral guarantees
  over complete orbital trajectories.

\item \textbf{$O(N_c)$ Verification Complexity Theorem.}
  Formal proof that Glass Box overhead is linear in
  the number of constitutional rules and independent
  of model size.

\item \textbf{Explainability Architecture.}
  Weighted score $E(a_t) \in [0,1]$ with full
  constitutional audit log for every autonomous
  decision across the contact-free window.

\item \textbf{Complete Worked Example.}
  Glass Box intercepting a borderline inference request
  at eclipse-entry, with full constraint evaluation,
  verifier output, and behavioral consequence.
\end{enumerate}

The remainder of this paper is organized as follows.
Section~\ref{sec:related} surveys related work.
Section~\ref{sec:overview} describes the Project October
architecture.
Section~\ref{sec:twin} presents the Orbital Digital Twin.
Section~\ref{sec:gb} gives the formal Glass Box
specification.
Sections~\ref{sec:grammar} and~\ref{sec:ltl} present
the constraint grammar and LTL invariants.
Section~\ref{sec:complexity} proves verification
complexity.
Section~\ref{sec:example} works through the eclipse-entry
example.
Section~\ref{sec:discussion} discusses implications
for orbital data centers.

\section{Related Work}
\label{sec:related}

\subsection{Space Data Centers and Orbital Computing}

The concept of orbital computing infrastructure has
shifted from science fiction to funded engineering
programs in fewer than five years.
Hewlett Packard Enterprise deployed the Spaceborne
Computer-2 on the International Space Station in 2021,
demonstrating that commercial off-the-shelf edge AI
hardware can survive the radiation and thermal
environment of LEO with software-based
mitigation~\cite{hpe_spaceborne2021}.
The system ran AI workloads including protein folding
models and wildfire detection -- but with no runtime
constitutional verification of the decisions those
workloads produced.

Microsoft's Azure Space initiative and ESA's
Phi-Satellite program have both moved toward running
cloud workloads at orbital edge nodes to reduce
ground-uplink latency for time-sensitive Earth
observation tasks~\cite{microsoft_esa2023,esa_phi2022}.
D-Orbit's ION platform offers in-orbit cloud
computing as a commercial service, effectively
operating as an orbital edge data
center~\cite{dorbit_ion2023}.
Starcloud is explicitly building orbital data center
nodes targeting hyperscale AI compute in space.
None of these platforms defines a runtime AI decision
governance framework.

The core argument of this paper is that orbital
data center infrastructure at scale makes the absence
of runtime AI verification not just a gap but an
existential architectural risk.
A ground data center can be taken offline for
maintenance.
An orbital data center operating 86 autonomous minutes
per orbit cannot.

\subsection{Spacecraft Fault Detection, Isolation,
and Recovery}

The ESA and NASA FDIR standards address hardware
subsystem failures through pre-programmed
event-action tables~\cite{biesbroek2021fdir,nasa_fdir2020}.
FDIR is reactive: it detects a hardware condition after
it has already occurred and fires a recovery sequence.
FDIR has no concept of intercepting an AI-generated
action before execution, carries no formal semantic
specification for AI behavioral governance, and cannot
reason about whether a proposed action violates a power
or thermal constraint before that constraint is actually
violated.
Glass Box is architecturally complementary to FDIR:
FDIR handles hardware fault recovery; Glass Box handles
AI decision governance before hardware is ever stressed.

The European ECSS-E-ST-70-11C standard for spacecraft
onboard software~\cite{ecss2020} specifies timing and
safety requirements for flight software but does not
define verification semantics for AI-generated
command streams.
The CCSDS SOIS monitoring framework~\cite{ccsds2018}
provides parameter monitoring but operates on telemetry
values rather than intercepting candidate commands.

\subsection{Runtime Verification and Formal Methods}

Runtime verification (RV) as a formal discipline was
systematized by Leucker and Schallhart~\cite{leucker2009rv},
who established RV as a complement to model checking
for systems where full pre-deployment verification is
intractable.
The RV framework evaluates formal specifications over
execution traces of running systems.
Glass Box is an RV system specialized for autonomous
spacecraft: its specifications are orbital-mechanics-grounded
rather than abstract propositions, its target hardware
is sub-watt ARM processors rather than server-class
machines, and its evaluation latency must satisfy
spacecraft control-loop timing in the single-digit
millisecond range.

Linear Temporal Logic, introduced by
Pnueli~\cite{pnueli1977}, provides the formal language
for Glass Box safety invariants.
LTL model checking via NuSMV~\cite{nusmv} and SMT
solving via Z3~\cite{z3solver} enable mechanical
verification of invariant consistency with the
constitutional constraint grammar.
The use of LTL for spacecraft behavioral specification
has been explored in the context of mission planning
autonomy~\cite{ltl_space2019} but not in the context
of runtime AI decision interception.

\subsection{Constitutional AI and Alignment}

Bai et al.~\cite{bai2022cai} introduced Constitutional
AI as a training-time technique for aligning large
language models with stated behavioral principles
through AI feedback.
The constitutional metaphor -- governing behavior
by a set of stated principles rather than exhaustive
enumeration of rules -- is directly applicable to
spacecraft AI governance.
However, the mechanism is fundamentally different.
Constitutional AI as defined by Bai et al.\ operates
during model training.
Glass Box operates at runtime, on deployed embedded
hardware, against physical constraints that cannot be
anticipated during training.
The satellite's constitutional constraints change
continuously as its orbital state changes; they are not
fixed principles but physics-derived predicates that
evaluate differently at every point in the orbit.
A constraint that permits full inference at orbital
noon may correctly block the same inference at
eclipse-entry under a depleted battery state.
No training-time technique can encode this.

More broadly, the AI alignment literature has focused
predominantly on language model behavior~\cite{ji2023survey}.
Alignment for physical autonomous systems operating
in resource-constrained, radiation-exposed, contact-free
environments has received comparatively little attention.
This paper addresses that gap directly.

\subsection{TinyML and Edge AI for Satellites}

Furano et al.~\cite{furano2020} provided the seminal
survey of edge AI deployment for space systems,
demonstrating that commercial edge AI hardware
(Google Coral TPU, NVIDIA Jetson) can support
meaningful inference workloads on satellite power
budgets with appropriate thermal and radiation
mitigation.
Giuffrida et al.~\cite{giuffrida2021} demonstrated
the CloudScout system performing cloud detection
inference onboard a CubeSat, establishing the
operational precedent for onboard AI inference in LEO.
Both works characterize what edge AI hardware can
do in space.
Neither addresses how to verify that the actions
those AI systems produce are constitutionally safe
before they reach spacecraft actuators.

Jacob et al.~\cite{jacob2018quant} characterized the
accuracy-versus-efficiency tradeoffs of INT8 neural
network quantization, establishing the theoretical
foundation for deploying full-precision models on
edge accelerators with bounded accuracy degradation.
The quantization error bound of
$\|W - Q(W)\|_\infty \leq W_\text{max}/127$ for INT8
is directly relevant to the Glass Box Bayesian
confidence gate (Constraint~C6), which detects when
quantization-amplified prediction uncertainty exceeds
safe thresholds.

Malaiya et al.~\cite{malaiya2021} survey TinyML
deployment constraints for resource-limited embedded
systems, providing the hardware capability bounds
against which Glass Box verification overhead must
be evaluated.

\subsection{Bayesian Uncertainty and Safety-Critical AI}

Gal and Ghahramani~\cite{gal2016} established Monte
Carlo Dropout as a computationally tractable
approximation to Bayesian inference for deep neural
networks, enabling predictive entropy estimation
without explicit probabilistic model training.
The entropy threshold gate in Glass Box Constraint~C6
relies directly on this result: predictive entropy
$H(\hat{p}) = -\sum_c \hat{p}_c \log \hat{p}_c$ computed
via $K=20$ dropout passes provides a calibrated measure
of model confidence that correlates with radiation-induced
weight tensor corruption.
A corrupted weight tensor produces systematically
elevated predictive entropy even when the corrupted
model produces a confident-looking argmax output;
the entropy gate catches what the argmax misses.

Lakshminarayanan et al.~\cite{laks2017} extended
uncertainty quantification to deep ensembles,
demonstrating calibration advantages over single-model
approaches.
For spacecraft applications where ensemble diversity
would require deploying multiple model instances on
power-constrained hardware, the Monte Carlo Dropout
approach of Gal and Ghahramani remains the operationally
practical option.

\subsection{Model Predictive Control for Spacecraft}

Rawlings et al.~\cite{rawlings2017mpc} establish the
theoretical foundation for Model Predictive Control
as a receding-horizon optimization framework.
MPC has been applied to spacecraft attitude
control~\cite{mpc_attitude2020} and orbit maintenance
scheduling~\cite{mpc_orbit2018}.
Project October extends MPC to inference task
scheduling: the MPC solver allocates inference
operations within solar power windows derived from
Orekit orbital ephemeris, subject to Glass Box
constitutional constraints as hard feasibility
requirements rather than soft penalties.

\subsection{Radiation Effects and Fault Tolerance in LEO}

The CREME96 model~\cite{adams2012creme} provides the
standard reference for single-event-upset rate
estimation in LEO as a function of orbital parameters
and solar flux.
At 550\,km with $28.5^\circ$ inclination,
$\lambda_\text{SEU} \approx 10^{-5}$\,upsets\,bit$^{-1}$day$^{-1}$
for 90\,nm SRAM -- the technology node of the edge AI
accelerators targeted by Project October.
The probabilistic bit-flip injection model~\cite{quinn2015}
provides the Monte Carlo framework for characterizing
how radiation-induced weight tensor corruption degrades
inference accuracy as a function of SEU rate.

Triple Modular Redundancy for spacecraft
processors~\cite{tmr_cubesat2020} provides hardware-level
fault tolerance but does not address the case where all
three redundant inference outputs are identically
degraded by a shared weight tensor -- a failure mode
that Glass Box Constraint~C5 and C6 are specifically
designed to catch.

\section{Project October: System Architecture}
\label{sec:overview}

Project October is a five-layer autonomous orbital
intelligence architecture designed for CubeSat-class
spacecraft.
Every layer operates in full simulation during Phase~I;
the architecture targets hardware deployment in Phase~II.

The guiding design principle is the
\textbf{Systems Intelligence Equation}:
\begin{equation}
\Pi(t) = f\!\bigl(
  O_t,\; P_t,\; T_t,\; R_t,\; C_t,\; A_t
\bigr)
\label{eq:pi}
\end{equation}
where $O_t$ is orbital state, $P_t$ is power state,
$T_t$ is thermal state, $R_t$ is radiation state,
$C_t$ is the active constitutional constraint set,
and $A_t$ is AI inference state.
No autonomous decision is issued unless all six
components are simultaneously within verified bounds.

\begin{figure}[!t]
\centering
\begin{tikzpicture}[
  font=\footnotesize,
  lbox/.style={
    draw, thick, rounded corners=2pt,
    minimum width=\columnwidth-4pt,
    minimum height=1.0cm,
    text width=\columnwidth-16pt,
    align=center,
    inner sep=5pt
  },
  arr/.style={
    -{Stealth[length=5pt, width=4pt]},
    thick, color=navyblue!65
  },
  feedarr/.style={
    -{Stealth[length=4pt, width=3pt]},
    dashed, thick, color=firered
  },
  lbl/.style={
    font=\scriptsize\itshape, color=navyblue!55
  }
]

\node[lbox, fill=blue!7, draw=navyblue] (L1) at (0,0) {
  \textbf{\textcolor{navyblue}{LAYER 1 \;|\; ORBITAL DIGITAL TWIN}}\\[1pt]
  {\scriptsize Orekit + NASA GMAT \quad Power Model \quad Thermal Model}\\
  {\scriptsize Radiation (CREME96) \quad Attitude Dynamics \quad Comms}
};

\node[lbox, fill=green!5, draw=deepgreen,
      below=0.50cm of L1] (L2) {
  \textbf{\textcolor{deepgreen}{LAYER 2 \;|\; TinyML INFERENCE ENGINE}}\\[1pt]
  {\scriptsize INT8 Quantized Models \quad Bayesian Confidence Estimation}\\
  {\scriptsize MPC Inference Scheduler (Orekit-driven) \quad Compression}
};

\node[lbox, fill=red!5, draw=firered,
      below=0.50cm of L2] (L3) {
  \textbf{\textcolor{firered}{LAYER 3 \;|\; GLASS BOX CONSTITUTIONAL LAYER}}\\[1pt]
  {\scriptsize 6-Constraint Grammar \quad 7 LTL Safety Invariants}\\
  {\scriptsize Explainability Score $E(a_t)$ \quad Audit Log \quad Safe-Mode}
};

\node[lbox, fill=violet!4, draw=midblue!80!black,
      below=0.50cm of L3] (L4) {
  \textbf{\textcolor{midblue!80!black}{LAYER 4 \;|\; FAULT TOLERANCE + RECOVERY}}\\[1pt]
  {\scriptsize Triple Modular Redundancy \quad Watchdog Timers}\\
  {\scriptsize Monte Carlo Fault Injection \quad Memory Rollback}
};

\node[lbox, fill=orange!4, draw=goldbrown!80!black,
      below=0.50cm of L4] (L5) {
  \textbf{\textcolor{goldbrown!80!black}{LAYER 5 \;|\; CONSTELLATION INTELLIGENCE}}\\[1pt]
  {\scriptsize CBBA Swarm Protocol \quad Delay-Tolerant Networking}\\
  {\scriptsize Federated Learning \quad Ground Relay Independence}
};

\draw[arr] (L1.south) -- node[right, lbl, xshift=2pt]
  {telemetry / orbital state} (L2.north);
\draw[arr] (L2.south) -- node[right, lbl, xshift=2pt]
  {candidate decisions} (L3.north);
\draw[arr] (L3.south) -- node[right, lbl, xshift=2pt]
  {verified commands} (L4.north);
\draw[arr] (L4.south) -- node[right, lbl, xshift=2pt]
  {fault-safe packets} (L5.north);

\draw[feedarr, rounded corners=5pt]
  (L4.east)   -- ++(0.32cm, 0)
              -- ++(0, 1.05cm)
              -- ++(-0.32cm, 0);
\node[font=\tiny\itshape, color=firered, rotate=90]
  at ([xshift=0.45cm, yshift=0.52cm]L4.east)
  {safety feedback};

\end{tikzpicture}
\caption{Project October five-layer autonomous orbital
intelligence stack. Data flows downward; Glass Box
safety feedback propagates upward through the fault
tolerance layer. Layer~3 (Glass Box) is the
constitutional governance layer for the entire system.}
\label{fig:arch}
\end{figure}

Fig.~\ref{fig:arch} shows the full architecture.
Glass Box occupies the center of the stack.
It receives physics state from Layers~1 and~2 and its
verified outputs govern both Layer~4 fault recovery
and Layer~5 constellation coordination.

\section{Orbital Digital Twin}
\label{sec:twin}

The Orbital Digital Twin (ODT) provides physics-accurate
state that makes Glass Box constraints meaningful rather
than heuristic.
Seven components run in parallel, each exporting a
state vector into a shared dictionary that Glass Box
reads on every verification call.

\textbf{Orbit Twin} solves the full perturbed equations
of motion via Orekit~\cite{orekit2010} and NASA
GMAT~\cite{gmat2014}:
\begin{equation}
\ddot{\mathbf{r}} =
  -\frac{\mu}{r^3}\mathbf{r}
  + \mathbf{a}_{J_2}
  + \mathbf{a}_\text{drag}
  + \mathbf{a}_\text{SRP}
  + \mathbf{a}_\odot
  + \mathbf{a}_\text{moon}
\label{eq:eom}
\end{equation}
with $\mu = 3.986\times10^{14}$\,m$^3$s$^{-2}$.
At 550\,km LEO the J$_2$ oblateness term dominates.

\textbf{Power Twin} tracks solar generation and
battery state of charge (SOC):
\begin{equation}
P_\text{solar}(t)
  = \eta\, A_\text{panel}\, I_\odot\, \cos\theta(t)
  \;\mathbf{1}_{\neg\text{eclipse}}(t)
\label{eq:psolar}
\end{equation}
\begin{equation}
\text{SOC}_{t+1}
  = \text{SOC}_t
  + \frac{\bigl(P_\text{solar}(t)-P_\text{load}(t)\bigr)\Delta t}
         {C_\text{bat}(t)}\,\eta_\text{coul}
\label{eq:soc}
\end{equation}
with $\eta \in [0.28, 0.32]$ for triple-junction GaAs
panels and $\eta_\text{coul} \approx 0.98$ for Li-ion.

\textbf{Radiation Twin} models Single Event Upset rates
from CREME96~\cite{adams2012creme}:
\begin{equation}
P_\text{SEU}(t) = 1 - e^{-\lambda_\text{SEU}\,t}
\label{eq:pseu}
\end{equation}
with $\lambda_\text{SEU} \approx 10^{-5}$\,upsets\,%
bit$^{-1}$day$^{-1}$ at 550\,km, $28.5^\circ$
inclination.
This value is what Glass Box Constraint~C5 reads in real
time to determine whether Triple Modular Redundancy
must be mandated.

\section{Glass Box: Formal Specification}
\label{sec:gb}

\subsection{Definitions}

\begin{definition}[Spacecraft Decision State]
The decision state at time $t$ is:
\begin{equation}
s_t = \bigl(O_t,\; P_t,\; T_t,\; R_t,\; C_t\bigr)
\label{eq:state}
\end{equation}
where each component is the real-time output of the
corresponding Digital Twin.
\end{definition}

\begin{definition}[AI Decision Policy]
The onboard AI policy $\pi_\theta: s_t \mapsto a_t$
is an INT8-quantized neural network with
$\leq 512$\,KB parameter footprint, producing a
candidate action $a_t$ from the current state.
\end{definition}

\begin{definition}[The Glass Box Verifier]
\begin{equation}
\verif[\at,\,\st] =
\begin{cases}
  \approve &
    \forall\,c_i \in C_t:\; c_i(\at,\st) = \top \\[3pt]
  \flagcmd &
    \exists\,c_i:\; c_i(\at,\st) = \bot
    \;\wedge\; \mathrm{sev}(c_i) < \tau_\text{block} \\[3pt]
  \blockcmd &
    \exists\,c_i:\; c_i(\at,\st) = \bot
    \;\wedge\; \mathrm{sev}(c_i) \geq \tau_\text{block}
\end{cases}
\label{eq:verifier}
\end{equation}
\end{definition}

\textbf{\approve} logs the decision with an
explainability score and issues the command.
\textbf{\flagcmd} logs the violation, queues a priority
downlink request, and substitutes a safe default action.
\textbf{\blockcmd} cancels the command, triggers
safe-mode if warranted, and writes a detailed fault
record.
No command reaches spacecraft actuators without
passing through this gate.

\subsection{Explainability Score}

Every \approve\ decision is downlinked with:
\begin{equation}
E(a_t) = \sum_{i=1}^{N_c} w_i \cdot c_i(a_t, s_t)
\;\in\; [0,1]
\label{eq:exp}
\end{equation}
where $\sum_i w_i = 1$ and weights reflect
mission-phase constraint priorities.
Ground operators receive not just what the satellite
decided during a contact-free window but why --
constraint by constraint, value by value, at every
autonomous decision point.

\begin{figure}[!t]
\centering
\begin{tikzpicture}[
  font=\scriptsize,
  proc/.style={
    draw, thick, rounded corners=2pt,
    minimum width=2.1cm, minimum height=0.75cm,
    align=center, inner sep=4pt
  },
  dmd/.style={
    draw, thick, diamond,
    minimum width=2.4cm, minimum height=0.9cm,
    align=center, aspect=1.8, inner sep=1pt,
    fill=yellow!12
  },
  arr/.style={-{Stealth[length=4.5pt]}, thick},
  lbl/.style={font=\scriptsize\itshape, color=navyblue!65}
]

\node[proc, fill=blue!7, draw=navyblue, text=navyblue]
  (pol) {AI Policy $\pi_\theta(s_t)$\\{\tiny generates $a_t$}};

\node[proc, fill=red!5, draw=firered, text=firered,
      right=1.0cm of pol]
  (bat) {Constitutional\\Battery\\{\tiny $\{c_1 \ldots c_6\}$}};

\node[dmd, below=0.9cm of bat]
  (dec) {All $c_i {=} \top$?};

\node[proc, fill=green!8, draw=deepgreen,
      text=deepgreen,
      below left=1.0cm and 0.1cm of dec]
  (app) {\approve\\{\tiny log $+ E(a_t)$}};

\node[proc, fill=orange!10,
      draw=goldbrown!80!black,
      text=goldbrown!80!black,
      below=1.0cm of dec]
  (flg) {\flagcmd\\{\tiny downlink req.}};

\node[proc, fill=red!10, draw=firered, text=firered,
      below right=1.0cm and 0.1cm of dec]
  (blk) {\blockcmd\\{\tiny safe mode}};

\node[above=0.12cm of bat,
      font=\tiny, color=firered!70, align=center]
  {$c_1$\,power $|$ $c_2$\,thermal $|$ $c_3$\,collision};
\node[below=0.04cm of bat,
      font=\tiny, color=firered!70, align=center]
  {$c_4$\,battery $|$ $c_5$\,radiation $|$ $c_6$\,entropy};

\draw[arr] (pol) -- node[above, lbl]{candidate} (bat);
\draw[arr] (bat) -- node[right, lbl]{evaluate}  (dec);
\draw[arr] (dec.south west)
           -- node[left,  lbl]{yes}      (app.north);
\draw[arr] (dec.south)
           -- node[right, lbl, xshift=2pt]{low sev.} (flg.north);
\draw[arr] (dec.south east)
           -- node[right, lbl]{high sev.}(blk.north);
\end{tikzpicture}
\caption{Glass Box constitutional verification flow.
Every AI policy output passes through the six-constraint
battery before any command reaches spacecraft hardware.}
\label{fig:flow}
\end{figure}

\section{Constitutional Constraint Grammar}
\label{sec:grammar}

Six constraints form the constitutional battery.
Each is a closed-form arithmetic predicate over live
Digital Twin state, evaluating in $O(1)$ time.
The six constraints are not independent rules;
they form a battery in the electrical sense --
every constraint must close for an action to pass.

\begin{constraint}[Power Safety Margin]
\begin{equation}
c_1(a_t,s_t):\quad
  P_\text{gen}(t) - P_\text{load}(a_t) \geq P_\text{safe}
\label{eq:c1}
\end{equation}
\textit{An action is power-feasible only if it preserves
at least $P_\text{safe} = 0.5$\,W above all housekeeping
loads.
Computed from Power Twin output at every call.}
\end{constraint}

\begin{constraint}[Thermal Operating Envelope]
\begin{equation}
c_2(a_t,s_t):\quad
  T_\text{min} < T(t) < T_\text{max}
\label{eq:c2}
\end{equation}
\textit{Inference permitted only within
$[-20^\circ\mathrm{C},\;+70^\circ\mathrm{C}]$.
Thermal Twin updates $T(t)$ from the spacecraft
heat-balance equation including radiative cooling
and electronics heat generation at each time step.}
\end{constraint}

\begin{constraint}[Collision Avoidance]
\begin{equation}
c_3(a_t,s_t):\quad
  \Pr(\text{collision}\mid a_t) < \varepsilon_c = 10^{-4}
\label{eq:c3}
\end{equation}
\textit{No maneuver command approved if it raises
conjunction probability above $10^{-4}$, evaluated
from orbital state $O_t$ and Space-Track TLE data.
Critical for orbital data center nodes operating
in dense LEO shell populations.}
\end{constraint}

\begin{constraint}[Battery State Floor]
\begin{equation}
c_4(a_t,s_t):\quad
  \mathrm{SOC}_t \geq 0.20
\label{eq:c4}
\end{equation}
\textit{All non-essential operations including AI
inference are suspended when battery SOC falls below
20\%.
This constraint prevents the largest class of
mission-loss scenarios in simulation.
It is deliberately conservative: eclipse-entry with a
depleted battery and no inference pending is survivable;
eclipse-entry with inference running and a battery below
deep-discharge threshold is not.}
\end{constraint}

\begin{constraint}[Radiation Fault Gate]
\begin{equation}
c_5(a_t,s_t):\quad
  P_\text{SEU}(t) \leq P^\text{max}_\text{SEU}
\label{eq:c5}
\end{equation}
\textit{When the CREME96-derived SEU rate exceeds
threshold, Triple Modular Redundancy is mandated and
unprotected inference is blocked.
This constraint is what connects the live radiation
environment to AI decision governance -- a coupling
that does not exist in any prior spacecraft AI
architecture we are aware of.}
\end{constraint}

\begin{constraint}[Bayesian Confidence Gate]
\begin{equation}
c_6(a_t,s_t):\quad
  H(\hat{p}_t) \leq \tau_H
\label{eq:c6}
\end{equation}
\textit{Actions from high-entropy predictions are
blocked.
Predictive entropy
$H(\hat{p}) = -\sum_c \hat{p}_c \log \hat{p}_c$
is estimated via Monte Carlo Dropout with $K=20$
forward passes~\cite{gal2016}.
This gate detects radiation-corrupted model outputs:
a weight tensor corrupted by an SEU may still produce
a confident argmax but will produce elevated predictive
entropy, which C6 catches before the action executes.}
\end{constraint}

Table~\ref{tab:grammar} summarizes the six constraints
with their Physical Twin source and failure consequences.

\begin{table}[!t]
\caption{Constitutional Constraint Grammar Summary}
\label{tab:grammar}
\centering
\footnotesize
\setlength{\tabcolsep}{4pt}
\begin{tabular}{@{}cllll@{}}
\toprule
\tableheadfont ID &
\tableheadfont Name &
\tableheadfont Twin Source &
\tableheadfont Outcome &
\tableheadfont Failure Consequence \\
\midrule
\rowcolor{altrow}
C1 & Power Margin    & Power Twin    & FLAG  & Battery drain at eclipse \\
C2 & Thermal Bounds  & Thermal Twin  & BLOCK & Component degradation    \\
\rowcolor{altrow}
C3 & Collision Avoid & Orbit Twin    & BLOCK & Conjunction risk         \\
C4 & Battery Floor   & Power Twin    & BLOCK & Deep discharge           \\
\rowcolor{altrow}
C5 & Radiation Gate  & Radiation Twin& FLAG  & SEU data corruption      \\
C6 & Confidence Gate & AI State      & FLAG  & Corrupted inference acts \\
\bottomrule
\end{tabular}
\end{table}

\section{LTL Safety Specification}
\label{sec:ltl}

Constitutional constraints verify individual decisions.
LTL invariants verify behavioral \emph{sequences} --
properties that must hold over every future trajectory
of spacecraft operation.
We use standard LTL semantics~\cite{pnueli1977} over
traces $\sigma = s_0 s_1 s_2 \ldots$ with operators
$\square$ (globally), $\lozenge$ (eventually),
and $\mathcal{U}$ (until).

\begin{invariant}[Inference Power Gate]
\begin{equation}
\square\bigl(\mathrm{SOC} < 0.20
  \;\Rightarrow\; \neg\,\mathrm{inference}\bigr)
\label{eq:si1}
\end{equation}
\end{invariant}

\begin{invariant}[Mandatory Anomaly Downlink]
\begin{equation}
\square\bigl(\mathrm{anomaly\_flagged}
  \;\Rightarrow\;
  \lozenge_{\leq 2T_\mathrm{orb}}\,\mathrm{downlink}\bigr)
\label{eq:si2}
\end{equation}
Any flagged anomaly must be downlinked within two
orbital periods.
This invariant ensures that Glass Box flags are not
silently dropped during extended contact-free windows.
\end{invariant}

\begin{invariant}[Thermal Safe-Mode]
\begin{equation}
\square\bigl(T > T_\mathrm{crit}
  \;\Rightarrow\; \mathrm{safe\_mode}\bigr)
\label{eq:si3}
\end{equation}
\end{invariant}

\begin{invariant}[Radiation Gate]
\begin{equation}
\square\bigl(\lambda_\mathrm{SEU}(t) > \lambda_\mathrm{max}
  \;\Rightarrow\; \mathrm{TMR\_active}\bigr)
\label{eq:si4}
\end{equation}
\end{invariant}

\begin{invariant}[Confidence Gate Sequence]
\begin{equation}
\square\bigl(H(\hat{p}_t) > \tau_H
  \;\Rightarrow\; \neg\,\mathrm{actuate}(a_t)\bigr)
\label{eq:si5}
\end{equation}
\end{invariant}

\begin{invariant}[Safe-Mode Stability]
\begin{equation}
\square\bigl(\mathrm{safe\_mode}
  \;\Rightarrow\;
  \mathrm{safe\_mode}\;\mathcal{U}\;
  \mathrm{recovery\_confirmed}\bigr)
\label{eq:si6}
\end{equation}
Safe-mode persists until recovery is explicitly
confirmed.
There is no opportunistic early exit from a
protection state.
\end{invariant}

\begin{invariant}[Watchdog Liveliness]
\begin{equation}
\square\,\lozenge_{\leq T_\mathrm{wd}}\;\mathrm{heartbeat}
\label{eq:si7}
\end{equation}
This invariant proves system liveliness: at least one
heartbeat must occur within every watchdog period,
guaranteeing that the system has not entered a silent
hang state.
\end{invariant}

All seven invariants are verified against the Glass Box
constraint grammar using Z3~\cite{z3solver} and
NuSMV~\cite{nusmv} model checking.
The full model-checking scripts and NuSMV specification
files are part of the Project October simulation
infrastructure.

\section{Verification Complexity}
\label{sec:complexity}

\begin{theorem}
Glass Box verification overhead is $O(N_c)$ in the
number of constitutional constraints, independent of
spacecraft state dimension, model parameter count,
or inference domain.
Total verification time satisfies:
\begin{equation}
t_\text{verify}
  = N_c \cdot t_\text{const} + t_\text{log}
  \;\leq\;
  N_c \cdot t_\text{max} + O(\!\log|H|)
\label{eq:complexity}
\end{equation}
\end{theorem}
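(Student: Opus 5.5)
The plan is to treat the verifier $\verif[\at,\st]$ of Eq.~(\ref{eq:verifier}) as a straight-line program and count its cost directly. The argument separates the three kinds of work done on each call: evaluating the constraint predicates, deciding the outcome, and writing the audit record. Each is bounded on its own and the bounds are added.

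\textbf{Step 1: cost of each predicate.} I would establish that every $c_i$ costs at most a fixed $t_\text{max}$. For this I go through Constraints~C1--C6 and note that each is a closed-form comparison of a few scalars taken from the shared Digital Twin dictionary: $P_\text{gen}-P_\text{load}$, $T(t)$, $\Pr(\text{collision}\mid a_t)$, $\mathrm{SOC}_t$, $P_\text{SEU}(t)$ and $H(\hat p_t)$. A dictionary lookup costs $O(1)$, and so does a fixed number of arithmetic operations. The key modelling choice is to charge the Twins (propagating Eq.~(\ref{eq:eom}), updating SOC via Eq.~(\ref{eq:soc})) and the $K=20$ Monte Carlo dropout passes to Layers~1--2, not to the verifier. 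The verifier only reads their exported scalars. Under that convention the predicate cost does not depend on $\dim s_t$, on the parameter count of $\pi_\theta$, or on the inference domain. I then set $t_\text{max}=\max_i t_{c_i}$.

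\textbf{Step 2: decision and explainability.} The decision in Eq.~(\ref{eq:verifier}) is a single pass over the $N_c$ Boolean results, tracking the maximum severity among violated constraints and comparing it once with $\tau_\text{block}$. That costs $O(N_c)$ and can be absorbed by enlarging $t_\text{const}$. The score $E(a_t)$ in Eq.~(\ref{eq:exp}) is a weighted sum of $N_c$ terms, also $O(N_c)$ and absorbed the same way. Summing over the $N_c$ constraints then gives $N_c\cdot t_\text{const}\le N_c\cdot t_\text{max}$.

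\textbf{Step 3: logging term.} This is the step I expect to be the main obstacle, because the statement never says what $H$ denotes in this term. I would interpret $H$ as the audit-log history and assume it is kept in a balanced indexed structure, for example a B-tree or skip list keyed by timestamp. Appending one record then costs $O(\log|H|)$. The record payload has size $O(N_c)$, which is already counted in Step~2.

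Adding the three bounds gives Eq.~(\ref{eq:complexity}). For $|H|$ bounded between downlinks, or for logging into a ring buffer at $O(1)$ cost, the total is $O(N_c)$. The weak point is that the independence claim in Step~1 follows from where the cost accounting is drawn, not from a property of the constraints themselves. C3 and C6 in particular hide nontrivial computation upstream, and the proof only holds if the paper accepts that the Twins and the dropout passes are charged to other layers.
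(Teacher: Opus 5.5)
Your proposal is correct and follows the same route as the paper's proof: each $c_i$ is an $O(1)$ closed-form comparison over precomputed Twin scalars, with no iteration over model weights or state history, and the audit-log write costs $O(\log|H|)$ with $H$ the log, so summing gives the stated bound. Your additions are left implicit in the paper: the explicit convention charging the Twins and the $K=20$ dropout passes to Layers~1--2, the $O(N_c)$ decision and $E(a_t)$ pass, and the observation that a bare $O(N_c)$ total needs $|H|$ bounded. The one bookkeeping fix is that folding the decision and score costs into $t_\text{const}$ requires enlarging $t_\text{max}$ by the same per-constraint amount, since you defined it as $\max_i t_{c_i}$.
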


\begin{proof}
Each $c_i(a_t, s_t)$ is a closed-form comparison
over bounded floating-point state variables.
Evaluation requires $O(1)$ arithmetic operations
regardless of spacecraft state vector dimension,
AI model parameter count, or inference domain.
No constraint evaluation requires iterating over
model weights or spacecraft state history.
The audit log write is $O(\log|H|)$ where $H$ is
log size.
Total complexity is therefore $O(N_c)$, linear in
the number of constraints.
\end{proof}

The practical implication: adding constitutional
rules to address new mission phases, new orbital
regimes, or new inference domains does not create
a verification bottleneck.
Glass Box scales with mission complexity at no cost
to timing determinism.
Full latency measurements across 2 through 16
constitutional rules on QEMU ARM Cortex-M4 are
reported in Paper~4 of this series.

\section{Worked Example: Glass Box at Eclipse-Entry}
\label{sec:example}

The best way to understand Glass Box is to watch it
work.
Here is a complete decision interception at $T = 47$\,min
into a 96-minute orbit.

\subsection{Spacecraft State}

Eclipse entry is 3 minutes away.
The Digital Twin reports the state in Table~\ref{tab:ex_state}.

\begin{table}[!h]
\caption{Spacecraft State at $T = 47$\,min}
\label{tab:ex_state}
\centering
\footnotesize
\setlength{\tabcolsep}{5pt}
\begin{tabular}{@{}llll@{}}
\toprule
\textbf{Component} & \textbf{Parameter} &
\textbf{Value} & \textbf{Units} \\
\midrule
\rowcolor{altrow}
Power Twin     & $P_\text{solar}$      & 4.20                 & W \\
Power Twin     & $P_\text{load,base}$  & 3.70                 & W \\
\rowcolor{altrow}
Power Twin     & SOC                   & 0.31                 & -- \\
Thermal Twin   & $T(t)$                & 42.1                 & $^\circ$C \\
\rowcolor{altrow}
Radiation Twin & $\lambda_\text{SEU}$  & $8.3 \times 10^{-6}$ &
                                         upsets\,b$^{-1}$d$^{-1}$ \\
AI State       & $H(\hat{p})$          & 0.29                 & nats \\
\rowcolor{altrow}
Orbit Twin     & Conjunction risk      & $2.1 \times 10^{-6}$ & -- \\
\bottomrule
\end{tabular}
\end{table}

The AI policy $\pi_\theta$ proposes:
\textbf{run full wildfire inference at INT8 precision},
adding $P_\text{inf} = 0.52$\,W to the load.

\subsection{Constitutional Battery Evaluation}

\begin{example}[Constraint C1 -- Power Safety Margin]
$c_1$: $P_\text{gen}(t) - (P_\text{load,base} + P_\text{inf})$
$= 4.20 - (3.70 + 0.52) = -0.02$\,W $< P_\text{safe} = 0.5$\,W.

\textit{Result: $c_1 = \bot$.}
Running inference now leaves a $-0.02$\,W margin.
With SOC at 0.31 and eclipse 3 minutes away,
this means immediate battery draw at the worst
possible moment.
Severity: $\mathrm{sev}(c_1) = 0.62$.
This is below $\tau_\text{block}$, so the outcome
is \flagcmd, not \blockcmd.
\end{example}

\begin{example}[Constraints C2 through C6]
$c_2$: $T = 42.1^\circ\mathrm{C}$, within $[-20, 70]$.
Result: $c_2 = \top$.

$c_3$: Conjunction risk $= 2.1\times10^{-6}
\ll 10^{-4}$.
Result: $c_3 = \top$.

$c_4$: $\mathrm{SOC} = 0.31 \geq 0.20$.
Result: $c_4 = \top$.

$c_5$: $\lambda_\text{SEU} = 8.3\times10^{-6}
< P^\text{max}_\text{SEU}$.
Result: $c_5 = \top$.

$c_6$: $H(\hat{p}) = 0.29 < \tau_H = 0.55$.
Result: $c_6 = \top$.
\end{example}

\subsection{Verifier Decision and Consequence}

One constraint failed: $c_1 = \bot$,
$\mathrm{sev}(c_1) = 0.62 < \tau_\text{block}$.

\begin{equation}
\verif[a_t, s_t] = \flagcmd
\label{eq:ex_result}
\end{equation}

Explainability score (with $w_1 = 0.30$ for power
as highest-weight constraint near eclipse):
\begin{equation}
E(a_t) = 0.30 \cdot 0 + 0.14 \cdot 1 + 0.14 \cdot 1
        + 0.14 \cdot 1 + 0.14 \cdot 1 + 0.14 \cdot 1
        = 0.70
\label{eq:ex_score}
\end{equation}

Glass Box does not cancel inference entirely.
It defers the request by 6 minutes to the
post-eclipse solar window, where $P_\text{solar}$
will recover to $> 4.5$\,W and C1 will evaluate to
$\top$ with comfortable margin.
The wildfire imagery is still classified.
Just 6 minutes later, with the battery intact.

This is what constitutional governance looks like:
not prohibition, but physics-grounded, bounded
decision-making that trades a 6-minute delay
for mission survivability.

\section{Discussion: Implications for Orbital Computing}
\label{sec:discussion}

\subsection{The Governance Gap at Scale}

The worked example above involves one satellite,
one inference request, and one borderline power state.
Now scale that to an orbital data center node running
hundreds of AI workloads simultaneously, coordinating
with a constellation of peers, and operating without
ground contact for extended periods.

At that scale, the governance gap we described in
Section~\ref{sec:related} becomes an architectural
emergency.
Every workload is a decision stream.
Every decision stream needs a constitutional filter.
Every filter needs to operate at the latency, power,
and complexity constraints of the platform it runs on.

Glass Box was designed for this trajectory.
The $O(N_c)$ complexity guarantee means verification
overhead scales linearly with constitutional rule
count, not with workload volume.
A platform running 100 concurrent inference tasks
does not require 100 times the verification overhead;
it requires one Glass Box instance per decision stream,
each running at the same sub-10\,ms latency.

\subsection{What Orbital Data Centers Need}

Ground data centers have decades of safety and
governance infrastructure: hardware abstraction
layers, OS-level resource management, hypervisor
isolation, API rate limiting.
Orbital data centers have none of this yet.

Glass Box proposes a constitutional governance layer
as the orbital equivalent of OS-level resource
management: a mandatory intermediary between AI
workloads and physical spacecraft resources that
enforces physics-grounded invariants in real time.
Just as an OS prevents a poorly written process from
consuming all CPU time, Glass Box prevents a poorly
calibrated AI workload from consuming all available
power at eclipse-entry.

The companies building orbital computing platforms
today -- D-Orbit, Starcloud, Lumen Space, and others
-- will need something like Glass Box before they
deploy at production scale.
This paper provides the formal specification for it.

\subsection{Limitations}

We are direct about where this work falls short.

\textbf{Simulation only.}
Every result in this paper is architecture and analysis.
Measured hardware results on Coral TPU silicon and
ARM Cortex-M4 are the subject of subsequent papers
in this series.

\textbf{Static constraint grammar.}
The six constitutional constraints are specified for
the Project October mission profile.
Automated synthesis of mission-specific constraints
from requirements documents is future work.

\textbf{Single-satellite analysis.}
Multi-satellite constitutional consensus for
constellation-level governance is addressed in a
subsequent paper in this series.

\section{Conclusion}

We introduced Glass Box: a runtime constitutional AI
verification layer for autonomous spacecraft and
orbital computing platforms.
The formal specification defines six physics-grounded
constitutional constraints, seven LTL safety invariants,
and a three-outcome verifier with $O(N_c)$ overhead.
Every approved autonomous decision carries a weighted
explainability score and a complete constitutional
audit log.

The eclipse-entry worked example makes the concrete
case: a borderline inference request that would have
drained the battery at the worst possible orbital
moment is correctly deferred by 6 minutes, the
wildfire detection still happens, and the mission
survives.

The bigger argument is this.
Orbital computing is becoming infrastructure.
Microsoft, AWS, and a generation of orbital edge
computing ventures are building platforms designed
to run AI workloads in space at data center scale.
None of them currently ship a runtime constitutional
verification layer.
Glass Box is the formal specification for what that
layer should look like.

Every platform that runs autonomous AI in space needs
one.
The 86-minute problem does not get easier as
constellations scale.
Constitutional AI verification is not a research
curiosity for the space industry.
It is the governance infrastructure that the orbital
computing era is going to require.

\balance
\section*{Acknowledgements}
The authors thank Prof.\ Josep M.\ Jornet at
Northeastern University for foundational work on
sub-THz inter-satellite communication that informs
the constellation layer of Project October.
Orbital simulation uses NASA GMAT~\cite{gmat2014}
and ESA Orekit~\cite{orekit2010}.
Flight software validation uses NASA cFS~\cite{cfsnasa}.
Training data from NASA Earthdata (MODIS active fire)
and ESA Copernicus (Sentinel-2) are used under
open data terms.
Fault injection modeling uses CREME96~\cite{adams2012creme}.
External simulation validation on BlackSwan Space.
Kaggle GPU compute supported model training.



\begin{thebibliography}{32}

\bibitem{microsoft_esa2023}
Microsoft and ESA,
``Azure Space and ESA Phi-Lab Partnership for Orbital
Edge Computing,''
\textit{Microsoft Blog}, Apr.\ 2023.
[Online]. Available: \url{https://azure.microsoft.com/space}

\bibitem{aws_groundstation}
Amazon Web Services,
``AWS Ground Station: Onboard Processing and Edge
Compute Services,''
\textit{AWS Technical Documentation}, 2023.
[Online]. Available: \url{https://aws.amazon.com/ground-station}

\bibitem{spacewatch2024orbital}
SpaceWatch Global,
``Orbital Edge Computing: Market Survey and Technology
Readiness Assessment,''
\textit{SpaceWatch Global Report}, 2024.

\bibitem{hpe_spaceborne2021}
Hewlett Packard Enterprise,
``Spaceborne Computer-2: AI at the Edge in Space,''
\textit{HPE Technical Report}, 2021.
[Online]. Available: \url{https://www.hpe.com/us/en/compute/hpc/hpe-spaceborne-computer.html}

\bibitem{esa_phi2022}
European Space Agency,
``Phi-Satellite: Onboard AI Processing for Earth
Observation,''
\textit{ESA Technical Memorandum}, 2022.

\bibitem{dorbit_ion2023}
D-Orbit,
``ION Satellite Carrier: In-Orbit Services and
Edge Computing Platform,''
\textit{D-Orbit Product Brief}, 2023.
[Online]. Available: \url{https://www.dorbit.space/ion}

\bibitem{biesbroek2021fdir}
R.\ Biesbroek et al.,
``The ESA FDIR Standard: Spacecraft Fault Detection,
Isolation and Recovery,''
\textit{ESA Technical Note}, 2021.

\bibitem{nasa_fdir2020}
NASA,
``Flight Software Development Standard,
NASA-STD-8739.8A,''
\textit{NASA Technical Standard}, 2020.

\bibitem{ecss2020}
ECSS,
``Space Engineering: Onboard Software,
ECSS-E-ST-70-11C,''
\textit{European Cooperation for Space Standardization},
2020.

\bibitem{ccsds2018}
CCSDS,
``Spacecraft Onboard Interface Services,
CCSDS 850.0-G-2,''
\textit{CCSDS Green Book}, 2018.

\bibitem{leucker2009rv}
M.\ Leucker and C.\ Schallhart,
``A Brief Account of Runtime Verification,''
\textit{J.\ Logic Algebraic Programming},
vol.\ 78, no.\ 5, pp.\ 293--303, 2009.

\bibitem{pnueli1977}
A.\ Pnueli,
``The Temporal Logic of Programs,''
\textit{IEEE FOCS}, pp.\ 46--57, 1977.

\bibitem{nusmv}
A.\ Cimatti et al.,
``NuSMV 2: An OpenSource Tool for Symbolic
Model Checking,''
\textit{CAV}, LNCS vol.\ 2404,
pp.\ 359--364, 2002.

\bibitem{z3solver}
L.\ de Moura and N.\ Bjorner,
``Z3: An Efficient SMT Solver,''
\textit{TACAS}, LNCS vol.\ 4963,
pp.\ 337--340, 2008.

\bibitem{ltl_space2019}
C.\ Menghi et al.,
``Towards a Catalog of LTL Patterns for
Specifying Autonomy Requirements,''
\textit{ICSE Workshop on Formal Methods in
Software Engineering}, 2019.

\bibitem{bai2022cai}
Y.\ Bai et al.,
``Constitutional AI: Harmlessness from AI Feedback,''
\textit{Anthropic Technical Report},
arXiv:2212.08073, 2022.

\bibitem{ji2023survey}
J.\ Ji et al.,
``AI Alignment: A Comprehensive Survey,''
arXiv:2310.19852, 2023.

\bibitem{furano2020}
G.\ Furano et al.,
``Towards the Use of Artificial Intelligence on the
Edge in Space Systems,''
\textit{IEEE Aerospace and Electronic Systems Magazine},
vol.\ 35, no.\ 12, pp.\ 44--55, 2020.

\bibitem{giuffrida2021}
G.\ Giuffrida et al.,
``CloudScout: A Deep Neural Network for Cloud Detection
on Hyperspectral Images,''
\textit{Remote Sensing}, vol.\ 13, no.\ 12,
p.\ 2325, 2021.

\bibitem{jacob2018quant}
B.\ Jacob et al.,
``Quantization and Training of Neural Networks for
Efficient Integer-Arithmetic-Only Inference,''
\textit{IEEE CVPR}, pp.\ 2704--2713, 2018.

\bibitem{malaiya2021}
S.\ Malaiya et al.,
``TinyML for Edge AI in Resource-Constrained Systems,''
\textit{IEEE Internet of Things Journal}, 2021.

\bibitem{gal2016}
Y.\ Gal and Z.\ Ghahramani,
``Dropout as a Bayesian Approximation: Representing
Model Uncertainty in Deep Learning,''
\textit{ICML}, pp.\ 1050--1059, 2016.

\bibitem{laks2017}
B.\ Lakshminarayanan et al.,
``Simple and Scalable Predictive Uncertainty
Estimation Using Deep Ensembles,''
\textit{NeurIPS}, pp.\ 6402--6413, 2017.

\bibitem{rawlings2017mpc}
J.\ B.\ Rawlings et al.,
\textit{Model Predictive Control: Theory,
Computation, and Design},
2nd ed.\ Nob Hill Publishing, 2017.

\bibitem{mpc_attitude2020}
C.\ Hartley et al.,
``Model Predictive Control for Spacecraft
Attitude Control,''
\textit{Journal of Guidance, Control, and Dynamics},
vol.\ 43, no.\ 3, pp.\ 536--548, 2020.

\bibitem{mpc_orbit2018}
G.\ Di Mauro et al.,
``MPC-Based Spacecraft Formation Control in
LEO,''
\textit{Acta Astronautica}, vol.\ 153,
pp.\ 311--326, 2018.

\bibitem{adams2012creme}
J.\ H.\ Adams et al.,
``CREME96: A Revision of the Cosmic Ray Effects
on Micro-Electronics Code,''
\textit{IEEE Trans.\ Nuclear Science},
vol.\ 59, no.\ 6, 2012.

\bibitem{quinn2015}
H.\ Quinn et al.,
``Fault-Tolerant Design Recommendations for
FPGAs in Space,''
\textit{NASA Technical Report}, 2015.

\bibitem{tmr_cubesat2020}
A.\ Jacobs et al.,
``Reconfiguration Strategies for Triple Modular
Redundancy in SRAM-Based FPGAs Under Radiation,''
\textit{IEEE Trans.\ Nuclear Science},
vol.\ 67, no.\ 1, pp.\ 219--228, 2020.

\bibitem{gmat2014}
S.\ P.\ Hughes et al.,
``General Mission Analysis Tool (GMAT),''
\textit{NASA GSFC Technical Report}, 2014.

\bibitem{orekit2010}
L.\ Maisonobe et al.,
``Orekit: An Open-Source Library for Operational
Flight Dynamics,''
\textit{SpaceOps Conference}, 2010.

\bibitem{cfsnasa}
NASA GSFC,
``Core Flight System (cFS) Architecture Document,''
\textit{NASA Goddard Directives}, 2021.

\end{thebibliography}
\end{document}